\title{Program Algebra for Turing-Machine Programs}
\author{J.A. Bergstra and C.A. Middelburg}
\institute{Informatics Institute, Faculty of Science,
           University of Amsterdam, \\
           Science Park~904, 1098~XH Amsterdam, the Netherlands \\
           \email{J.A.Bergstra@uva.nl,C.A.Middelburg@uva.nl}}
\begin{document}

\maketitle

\begin{abstract}
This paper presents an algebraic theory of instruction sequences with
instructions for Turing tapes as basic instructions, the behaviours 
produced by the instruction sequences concerned under execution, and the 
interaction between such behaviours and Turing tapes provided by an 
execution environment.
This theory provides a setting for the development of theory in areas 
such as computability and computational complexity that distinguishes 
itself by offering the possibility of equational reasoning and being 
more general than the setting provided by a known version of the 
Turing-machine model of computation.
The theory is essentially an instantiation of a parameterized 
algebraic theory which is the basis of a line of research in which 
issues relating to a wide variety of subjects from computer science have 
been rigorously investigated thinking in terms of instruction sequences.

\begin{keywords}
program algebra, thread algebra, model of computation, Turing-machine 
program, computability, computational complexity. 
\end{keywords}%
\begin{classcode}
F.1.1, F.1.3, F.4.1.
\end{classcode}
\end{abstract}

\section{Introduction}
\label{sect-intro}

This paper introduces an algebraic theory that provides a setting for 
the development of theory in areas such as computability and 
computational complexity.
The setting in question distinguishes itself by offering the possibility 
of equational reasoning and being more general than the setting provided 
by a known version of the Turing-machine model of computation.
Many known and unknown versions of the Turing-machine model of 
computation can be dealt with by imposing apposite restrictions.
We expect that the generality is conducive to the investigation of novel 
issues in areas such as computability and computational complexity.
This expectation is based on our experience with a comparable algebraic 
theory of instruction sequences, where instructions operate on Boolean 
registers, in previous work 
(see~\cite{BM13b,BM13a,BM14a,BM14e,BM13c,BM18a}).

It is often said that a program is an instruction sequence.
If this characterization has any value, it must be the case that it is 
somehow easier to understand the concept of an instruction sequence than 
to understand the concept of a program. 
In tune with this, the first objective of the work on instruction 
sequences that started with~\cite{BL02a}, and of which an enumeration is 
available at~\cite{SiteIS}, is to understand the concept of a program. 
The notion of an instruction sequence appears in the work in question as 
a mathematical abstraction for which the rationale is based on this 
objective. 

The structure of the mathematical abstraction at issue has been 
determined in advance with the hope of applying it in diverse 
circumstances where in each case the fit may be less than perfect. 
Until now, the work in question has, among other things, yielded an 
approach to non-uniform computational complexity where instruction 
sequence length is used as complexity measure, a contribution to the 
conceptual analysis of the notion of an algorithm, and new insights into 
such diverse issues as the halting problem, program parallelization for 
the purpose of explicit multi-threading and virus detection.

The basis of all the work in question (see~\cite{SiteIS}) is the 
combination of an algebraic theory of single-pass instruction sequences, 
called program algebra, and an algebraic theory of mathematical objects 
that represent the behaviours produced by instruction sequences under 
execution, called basic thread algebra, extended to deal with the 
interaction between such behaviours and components of an execution 
environment.
This combination is parameterized by a set of basic instructions and a 
set of objects that represent the behaviours exhibited by the components 
of an execution environment.

The current paper contains a simplified presentation of the 
instantiation of this combination in which all possible instructions to 
read out or alter the content of the cell of a Turing tape under the 
tape's head and to optionally move the head in either direction by one 
cell are taken as basic instructions and Turing tapes are taken as the 
components of an execution environment.
The rationale for taking certain instructions as basic instructions is 
that the instructions concerned are sufficient to compute each function 
on bit strings.
However, shorter instruction sequences may be possible if certain 
additional instructions are taken as basic instructions.
That is why we opted for the most general instantiation.

An instantiation in which instructions to read out or alter the content 
of a Boolean register are taken as basic instructions and Boolean 
registers are taken as the components of an execution environment 
turned out to be useful to rigorous investigations of issues relating to 
non-uniform computational complexity and algorithm efficiency 
(see e.g.~\cite{BM13a,BM14e}).
We expect that the instantiation presented in this paper can be useful 
to rigorous investigations relating to uniform computational complexity 
and algorithm efficiency.

Program algebra and basic thread algebra were first presented 
in~\cite{BL02a}.%
\footnote
{In that paper and the first subsequent papers, basic thread algebra 
 was introduced under the name basic polarized process algebra.}
The extension of basic thread algebra referred to above, an extension to 
deal with the interaction between the behaviours produced by instruction 
sequences under execution and components of an execution environment, 
was first presented in~\cite{BM09k}.
The presentation of the extension is rather involved because it is
parameterized and owing to this covers a generic set of basic 
instructions and a generic set of execution environment components.
In the current paper, a much less involved presentation is obtained by 
covering only the case where the execution environment components are
Turing tapes and the basic instructions are instructions to read out or 
alter the content of the cell of a Turing tape under the tape's head and 
to optionally move the head in either direction by one cell.

After the presentation in question, we make precise in the setting of 
the presented theory what it means that a given instruction sequence 
computes a given partial function on bit strings, introduce the notion 
of a single-tape Turing-machine program in the setting, give a result 
concerning the computational power of such programs, and give a result 
relating the complexity class $\mathbf{P}$ to the functions that can be 
computed by such programs in polynomial time.
We also give a simple example of a single-tape Turing-machine program.
This example is only given to illustrate the close resemblance of such 
programs to transition functions of Turing machines.
The notation that is used for Turing-machine programs is intended for 
theoretical purposes and not for actual programming.

This paper is organized as follows.
First, we introduce program algebra (Section~\ref{sect-PGA}) and basic 
thread algebra (Section~\ref{sect-BTA}) and extend their combination to 
make precise which behaviours are produced by instruction sequences 
under execution (Section~\ref{sect-TE-BC}).
Next, we present the instantiation of the resulting theory in which all 
possible instructions to read out or alter the content of the cell of a 
Turing tape under the tape's head and to optionally move the head in 
either direction by one cell are taken as basic instructions 
(Section~\ref{sect-PGAtt}), introduce an algebraic theory of Turing-tape 
families (Section~\ref{sect-TTFA}), and extend the combination of the 
theories presented in the two preceding sections to deal with the 
interaction between the behaviours of instruction sequences under 
execution and Turing tapes (Section~\ref{sect-TSI}).
Then, we formalize in the setting of the resulting theory what it means 
that a given instruction sequence computes a given partial function on 
bit strings (Section~\ref{sect-comput-boolstring-fnc}) and give as an 
example an instruction sequence that computes the non-zeroness test 
function (Section~\ref{sect-example}).
Finally, we make some concluding remarks (Section~\ref{sect-concl}).

In this paper, some familiarity with algebraic specification, 
computability, and computational complexity is assumed.
The relevant notions are explained in many handbook chapters and 
textbooks, e.g.~\cite{EM85a,ST12a,Wir90a} for notions concerning 
algebraic specification and~\cite{AB09a,HS11a,Sip13a} for notions
concerning computability and computational complexity.

Sections~\ref{sect-PGA}--\ref{sect-TE-BC} are largely shortened versions 
of Sections~2--4 of~\cite{BM18b}, which, in turn, draw from the 
preliminary sections of several earlier papers.

\section{Program Algebra}
\label{sect-PGA}

In this section, we introduce \PGA\ (ProGram Algebra).
The starting-point of \PGA\ is the perception of a program as a 
single-pass instruction sequence, i.e.\ a possibly infinite sequence of 
instructions of which each instruction is executed at most once and can 
be dropped after it has been executed or jumped over.
The concepts underlying the primitives of program algebra are common in
programming, but the particular form of the primitives is not common.
The predominant concern in the design of \PGA\ has been to achieve 
simple syntax and semantics, while maintaining the expressive power of 
arbitrary finite control.

It is assumed that a fixed but arbitrary set $\BInstr$ of 
\emph{basic instructions} has been given.
$\BInstr$ is the basis for the set of instructions that may occur in 
the instruction sequences considered in \PGA.
The intuition is that the execution of a basic instruction may modify a
state and must produce the Boolean value $\False$ or $\True$ as reply at 
its completion.
The actual reply may be state-dependent.
In applications of \PGA, the instructions taken as basic instructions 
vary from instructions relating to Boolean registers via instructions 
relating to Turing tapes to machine language instructions of actual 
computers.   

The set of instructions of which the instruction sequences considered 
in \PGA\ are composed is the set that consists of the following 
elements:
\begin{itemize}
\item
for each $a \in \BInstr$, a \emph{plain basic instruction} $a$;
\item
for each $a \in \BInstr$, a \emph{positive test instruction} $\ptst{a}$;
\item
for each $a \in \BInstr$, a \emph{negative test instruction} $\ntst{a}$;
\item
for each $l \in \Nat$, a \emph{forward jump instruction} $\fjmp{l}$;
\item
a \emph{termination instruction} $\halt$.
\end{itemize}
We write $\PInstr$ for this set.
The elements from this set are called \emph{primitive instructions}.

Primitive instructions are the elements of the instruction sequences 
considered in \PGA.
On execution of such an instruction sequence, these primitive 
instructions have the following effects:
\begin{itemize}
\item
the effect of a positive test instruction $\ptst{a}$ is that basic
instruction $a$ is executed and execution proceeds with the next
primitive instruction if $\True$ is produced and otherwise the next
primitive instruction is skipped and execution proceeds with the
primitive instruction following the skipped one --- if there is no
primitive instruction to proceed with,
inaction occurs;
\item
the effect of a negative test instruction $\ntst{a}$ is the same as
the effect of $\ptst{a}$, but with the role of the value produced
reversed;
\item
the effect of a plain basic instruction $a$ is the same as the effect
of $\ptst{a}$, but execution always proceeds as if $\True$ 
is produced;
\item
the effect of a forward jump instruction $\fjmp{l}$ is that execution
proceeds with the $l$th next primitive instruction --- if $l$ equals $0$ 
or there is no primitive instruction to proceed with, inaction occurs;
\item
the effect of the termination instruction $\halt$ is that execution 
terminates.
\end{itemize}
Inaction occurs if no more basic instructions are executed, but 
execution does not terminate.

\PGA\ has one sort: the sort $\InSeq$ of \emph{instruction sequences}. 
We make this sort explicit to anticipate the need for many-sortedness
later on.
To build terms of sort $\InSeq$, \PGA\ has the following constants and 
operators:
\begin{itemize}
\item
for each $u \in \PInstr$, 
the \emph{instruction} constant $\const{u}{\InSeq}$\,;
\item
the binary \emph{concatenation} operator 
$\funct{\ph \conc \ph}{\InSeq \x \InSeq}{\InSeq}$\,;
\item
the unary \emph{repetition} operator 
$\funct{\ph\rep}{\InSeq}{\InSeq}$\,.
\end{itemize}
Terms of sort $\InSeq$ are built as usual in the one-sorted case.
We assume that there are infinitely many variables of sort $\InSeq$, 
including $X,Y,Z$.
We use infix notation for concatenation and postfix notation for
repetition.

A \PGA\ term in which the repetition operator does not occur is called 
a \emph{repetition-free} \PGA\ term.

One way of thinking about closed \PGA\ terms is that they represent 
non-empty, possibly infinite sequences of primitive instructions with 
finitely many distinct suffixes.
The instruction sequence represented by a closed term of the form
$t \conc t'$ is the instruction sequence represented by $t$
concatenated with the instruction sequence represented by $t'$.%
\footnote
{The concatenation of an infinite sequence with a finite or infinite 
sequence yields the former sequence.}
The instruction sequence represented by a closed term of the form 
$t\rep$ is the instruction sequence represented by $t$ concatenated 
infinitely many times with itself.
A closed \PGA\ term represents a finite instruction sequence if and 
only if it is a closed repetition-free \PGA\ term.
 
The axioms of \PGA\ are given in Table~\ref{axioms-PGA}.%
\begin{table}[!t]
\caption{Axioms of \PGA} 
\label{axioms-PGA}
\begin{eqntbl}
\begin{axcol}
(X \conc Y) \conc Z = X \conc (Y \conc Z)             & \axiom{PGA1}  \\
(X^n)\rep = X\rep                                     & \axiom{PGA2}  \\
X\rep \conc Y = X\rep                                 & \axiom{PGA3}  \\
(X \conc Y)\rep = X \conc (Y \conc X)\rep             & \axiom{PGA4} 
\eqnsep
\fjmp{k{+}1} \conc u_1 \conc \ldots \conc u_k \conc \fjmp{0} =
\fjmp{0} \conc u_1 \conc \ldots \conc u_k \conc \fjmp{0} 
                                                      & \axiom{PGA5}  \\
\fjmp{k{+}1} \conc u_1 \conc \ldots \conc u_k \conc \fjmp{l} =
\fjmp{l{+}k{+}1} \conc u_1 \conc \ldots \conc u_k \conc \fjmp{l}
                                                      & \axiom{PGA6}  \\
(\fjmp{l{+}k{+}1} \conc u_1 \conc \ldots \conc u_k)\rep =
(\fjmp{l} \conc u_1 \conc \ldots \conc u_k)\rep       & \axiom{PGA7}  \\
\fjmp{l{+}k{+}k'{+}2} \conc u_1 \conc \ldots \conc u_k \conc
(v_1 \conc \ldots \conc v_{k'{+}1})\rep = {} \\ \phantom{{}{+}k'}
\fjmp{l{+}k{+}1} \conc u_1 \conc \ldots \conc u_k \conc
(v_1 \conc \ldots \conc v_{k'{+}1})\rep               & \axiom{PGA8} 
\end{axcol}
\end{eqntbl}
\end{table}
In this table, 
$u$, $u_1,\ldots,u_k$ and $v_1,\ldots,v_{k'+1}$ stand for arbitrary 
primitive instructions from $\PInstr$, 
$k$, $k'$, and $l$ stand for arbitrary natural numbers from $\Nat$, and
$n$ stands for an arbitrary natural number from $\Natpos$.%
\footnote
{We write $\Natpos$ for the set $\set{n \in \Nat \where n \geq 1}$ of
positive natural numbers.}
For each $n \in \Natpos$, the term $t^n$, where $t$ is a \PGA\ term, 
is defined by induction on $n$ as follows: $t^1 = t$, and 
$t^{n+1} = t \conc t^n$.

Let $t$ and $t'$ be closed \PGA\ terms.
Then $t = t'$ is derivable from the axioms of \PGA\ iff $t$ and $t'$ 
represent the same instruction sequence after changing all chained jumps 
into single jumps and making all jumps as short as possible. 
Moreover, $t = t'$ is derivable from PGA1--PGA4 iff $t$ and $t'$ 
represent the same instruction sequence. 
We write \PGAisc\ for the algebraic theory whose sorts, constants and
operators are those of \PGA, but whose axioms are PGA1--PGA4.

The informal explanation of closed \PGA\ terms as sequences of primitive 
instructions given above can be looked upon as a sketch of the intended 
model of the axioms of \PGAisc.
This model, which is described in detail in, for example, \cite{BM12b}, 
is an initial model of the axioms of \PGAisc.
Henceforth, the instruction sequences of the kind considered in \PGA\ 
are called \PGA\ instruction sequences.

\section{Basic Thread Algebra for Finite and Infinite Threads}
\label{sect-BTA}

In this section, we introduce \BTA\ (Basic Thread Algebra) and an 
extension of \BTA\ that reflects the idea that infinite threads are 
identical if their approximations up to any finite depth are identical.

\BTA\ is concerned with mathematical objects that model in a direct 
way the behaviours produced by \PGA\ instruction sequences under 
execution.
The objects in question are called threads.
A thread models a behaviour that consists of performing basic actions in 
a sequential fashion.
Upon performing a basic action, a reply from an execution environment
determines how the behaviour proceeds subsequently.
The possible replies are the Boolean values $\False$ and $\True$.

The basic instructions from $\BInstr$ are taken as basic actions.
Besides, $\Tau$ is taken as a special basic action.
It is assumed that $\Tau \notin \BAct$.
We write $\BActTau$ for $\BAct \union \set{\Tau}$.

\BTA\ has one sort: the sort $\Thr$ of \emph{threads}. 
We make this sort explicit to anticipate the need for many-sortedness
later on.
To build terms of sort $\Thr$, \BTA\ has the following constants and 
operators:
\begin{itemize}
\item
the \emph{inaction} constant $\const{\DeadEnd}{\Thr}$;
\item
the \emph{termination} constant $\const{\Stop}{\Thr}$;
\item
for each $\alpha \in \BActTau$, the binary 
\emph{postconditional composition} operator 
$\funct{\pcc{\ph}{\alpha}{\ph}}{\Thr \x \Thr}{\Thr}$.
\end{itemize}
Terms of sort $\Thr$ are built as usual in the one-sorted case. 
We assume that there are infinitely many variables of sort $\Thr$, 
including $x,y,z$.
We use infix notation for postconditional composition.
We introduce \emph{basic action prefixing} as an abbreviation: 
$\alpha \bapf t$, where $\alpha \in \BActTau$ and $t$ is a \BTA\ term, 
abbreviates $\pcc{t}{\alpha}{t}$.
We treat an expression of the form $\alpha \bapf t$ and the \BTA\ term 
that it abbreviates as syntactically the same.

Closed \BTA\ terms are considered to represent threads.
The thread represented by a closed term of the form 
$\pcc{t}{\alpha}{t'}$ models the behaviour that first performs $\alpha$, 
and then proceeds as the behaviour modeled by the thread represented by 
$t$ if the reply from the execution environment is $\True$ and proceeds 
as the behaviour modeled by the thread represented by $t'$ if the reply 
from the execution environment is $\False$. 
Performing $\Tau$, which is considered performing an internal action,
always leads to the reply $\True$.
The thread represented by $\Stop$ models the behaviour that does nothing 
else but terminate and the thread represented by $\DeadEnd$ models the 
behaviour that is inactive, i.e.\ it performs no more basic actions and 
it does not terminate. 

\BTA\ has only one axiom.
This axiom is given in Table~\ref{axioms-BTA}.
\begin{table}[!t]
\caption{Axiom of \BTA} 
\label{axioms-BTA}
\begin{eqntbl}
\begin{axcol}
\pcc{x}{\Tau}{y} = \pcc{x}{\Tau}{x}                      & \axiom{T1}
\end{axcol}
\end{eqntbl}
\end{table}
Using the abbreviation introduced above, it can also be written as
follows: $\pcc{x}{\Tau}{y} = \Tau \bapf x$.

Each closed \BTA\ term represents a finite thread, i.e.\ a thread with 
a finite upper bound to the number of basic actions that it can perform.
Infinite threads, i.e.\ threads without a finite upper bound to the
number of basic actions that it can perform, can be defined by means of 
a set of recursion equations (see e.g.~\cite{BM09k}).

A regular thread is a finite or infinite thread that can be defined by 
means of a finite set of recursion equations.
The behaviours produced by \PGA\ instruction sequences under execution 
are exactly the behaviours modeled by regular threads.

Two infinite threads are considered identical if their approximations up 
to any finite depth are identical.
The approximation up to depth $n$ of a thread models the behaviour that 
differs from the behaviour modeled by the thread in that it will become
inactive after it has performed $n$ actions unless it would terminate at
this point.
AIP (Approximation Induction Principle) is a conditional equation that
formalizes the above-mentioned view on infinite threads.
In AIP, the approximation up to depth $n$ is phrased in terms of the
unary \emph{projection} operator $\funct{\proj{n}}{\Thr}{\Thr}$.

The axioms for the projection operators and AIP are given in
Table~\ref{axioms-BTAinf}.
\begin{table}[!t]
\caption{Axioms for the projection operators and AIP} 
\label{axioms-BTAinf}
\begin{eqntbl}
\begin{axcol}
\proj{0}(x) = \DeadEnd                                  & \axiom{PR1} \\
\proj{n+1}(\DeadEnd) = \DeadEnd                         & \axiom{PR2} \\
\proj{n+1}(\Stop) = \Stop                               & \axiom{PR3} \\
\proj{n+1}(\pcc{x}{\alpha}{y}) =
\pcc{\proj{n}(x)}{\alpha}{\proj{n}(y)}                  & \axiom{PR4}
\eqnsep
\LAND{n \geq 0} \proj{n}(x) = \proj{n}(y) \Limpl x = y  & \axiom{AIP}
\end{axcol}
\end{eqntbl}
\end{table}
In this table, $\alpha$ stands for an arbitrary basic action from 
$\BActTau$ and $n$ stands for an arbitrary natural number from $\Nat$.
We write \BTAinf\ for \BTA\ extended with the projection operators, 
the axioms for the projection operators, and AIP.

By AIP, we have to deal in \BTAinf\ with conditional equational formulas 
with a countably infinite number of premises.
Therefore, infinitary conditional equational logic is used in deriving 
equations from the axioms of \BTAinf.
A complete inference system for infinitary conditional equational logic 
can be found in, for example, \cite{GV93}.

The depth of a finite thread is the maximum number of basic actions that 
the thread can perform before it terminates or becomes inactive.
We define the function $\depth$ that assigns to each closed \BTA\ term
the depth of the finite thread that it represents:
\begin{ldispl}
\depth(\Stop) = 0\;, \\
\depth(\DeadEnd) = 0\;, \\
\depth(\pcc{t}{\alpha}{t'}) = \max \set{\depth(t),\depth(t')} + 1\;.
\end{ldispl}%

\section{Thread Extraction and Behavioural Congruence}
\label{sect-TE-BC}

In this section, we make precise in the setting of \BTAinf\ which 
behaviours are produced by \PGA\ instruction sequences under 
execution and introduce the notion of behavioural congruence on \PGA\ 
instruction sequences.

To make precise which behaviours are produced by \PGA\ instruction 
sequences under execution, we introduce an operator $\extr{\ph}$.
For each closed \PGA\ term $t$, $\extr{t}$ represents the thread that
models the behaviour produced by the instruction sequence represented 
by $t$ under execution.

Formally, we combine \PGA\ with \BTAinf\ and extend the combination with 
the \emph{thread extraction} operator $\funct{\extr{\ph}}{\InSeq}{\Thr}$ 
and the axioms given in Table~\ref{axioms-thread-extr}.%
\begin{table}[!t]
\caption{Axioms for the thread extraction operator} 
\label{axioms-thread-extr}
\begin{eqntbl}
\begin{axcol}
\extr{a} = a \bapf \DeadEnd                            & \axiom{TE1}  \\
\extr{a \conc X} = a \bapf \extr{X}                    & \axiom{TE2}  \\
\extr{\ptst{a}} = a \bapf \DeadEnd                     & \axiom{TE3}  \\
\extr{\ptst{a} \conc X} = \pcc{\extr{X}}{a}{\extr{\fjmp{2} \conc X}}
                                                       & \axiom{TE4}  \\
\extr{\ntst{a}} = a \bapf \DeadEnd                     & \axiom{TE5}  \\
\extr{\ntst{a} \conc X} = \pcc{\extr{\fjmp{2} \conc X}}{a}{\extr{X}}
                                                       & \axiom{TE6}
\end{axcol}
\qquad
\begin{axcol}
\extr{\fjmp{l}} = \DeadEnd                             & \axiom{TE7}  \\
\extr{\fjmp{0} \conc X} = \DeadEnd                     & \axiom{TE8}  \\
\extr{\fjmp{1} \conc X} = \extr{X}                     & \axiom{TE9}  \\
\extr{\fjmp{l+2} \conc u} = \DeadEnd                   & \axiom{TE10} \\
\extr{\fjmp{l+2} \conc u \conc X} = \extr{\fjmp{l+1} \conc X}
                                                       & \axiom{TE11} \\
\extr{\halt} = \Stop                                   & \axiom{TE12} \\
\extr{\halt \conc X} = \Stop                           & \axiom{TE13}
\end{axcol}
\end{eqntbl}
\end{table}
In this table, 
$a$ stands for an arbitrary basic instruction from $\BInstr$, 
$u$ stands for an arbitrary primitive instruction from $\PInstr$, and 
$l$ stands for an arbitrary natural number from $\Nat$.
We write \PGABTA\ for the combination of \PGA\ and \BTAinf\ extended 
with the thread extraction operator and the axioms for the thread 
extraction operator.

If a closed \PGA\ term $t$ represents an instruction sequence that
starts with an infinite chain of forward jumps, then TE9 and TE11 can 
be applied to $\extr{t}$ infinitely often without ever showing that a 
basic action is performed.
In this case, we have to do with inaction and, being consistent with 
that, $\extr{t} = \DeadEnd$ is derivable from the axioms of \PGA\ and 
TE1--TE13.
By contrast, $\extr{t} = \DeadEnd$ is not derivable from the axioms of 
\PGAisc\ and TE1--TE13.
However, if closed \PGA\ terms $t$ and $t'$ represent instruction 
sequences in which no infinite chains of forward jumps occur, then 
$t = t'$ is derivable from the axioms of \PGA\ only if 
$\extr{t} = \extr{t'}$ is derivable from the axioms of \PGAisc\ and 
TE1--TE13.

If a closed \PGA\ term $t$ represents an infinite instruction 
sequence, then we can extract the approximations of the thread modeling
the behaviour produced by that instruction sequence under execution up 
to every finite depth: for each $n \in \Nat$, there exists a closed 
\BTA\ term $t''$ such that $\proj{n}(\extr{t}) = t''$ is derivable 
from the axioms of \PGA, TE1--TE13, the axioms of \BTA, and PR1--PR4.
If closed \PGA\ terms $t$ and $t'$ represent infinite instruction 
sequences that produce the same behaviour under execution, then this can
be proved using the following instance of AIP:
$\LAND{n \geq 0} \proj{n}(\extr{t}) = \proj{n}(\extr{t'}) \Limpl
 \extr{t} = \extr{t'}$.

The following proposition, proved in~\cite{BM12b}, puts the 
expressiveness of \PGA\ in terms of producible behaviours.
\begin{proposition}
\label{proposition-expr}
Let $\mathcal{M}$ be a model of \PGABTA.
Then, for each element $p$ from the domain associated with the sort 
$\Thr$ in $\mathcal{M}$, there exists a closed \PGA\ term $t$ such that 
$p$ is the interpretation of $\extr{t}$ in $\mathcal{M}$ iff $p$ is a 
component of the solution of a finite set of recursion equations 
$\set{V = t_V \where V \in \mathcal{V}}$, where $\mathcal{V}$ is a set 
of variables of sort $\Thr$ and each $t_V$ is a \BTA\ term that is not 
a variable and contains only variables from~$\mathcal{V}$.
\end{proposition}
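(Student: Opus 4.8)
The plan is to prove the two implications separately, the common thread being the exact correspondence between closed \PGA\ terms and finite \emph{guarded} \BTA\ recursion specifications, with uniqueness of solutions of such specifications supplied by AIP. Throughout I use that every non-variable \BTA\ term has as its outermost symbol either one of $\Stop$, $\DeadEnd$, or a postconditional composition $\pcc{\ph}{\alpha}{\ph}$; hence in a specification $\set{V = t_V \where V \in \mathcal{V}}$ of the stated form every variable occurrence is guarded by at least one action, so that AIP yields at most one solution in any model $\mathcal{M}$ of \PGABTA. This is what lets me identify an explicitly exhibited solution with the one the proposition speaks of.

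For the left-to-right direction I would start from the fact, recalled in Section~\ref{sect-PGA}, that a closed \PGA\ term $t$ represents a sequence of primitive instructions with only finitely many distinct suffixes. I associate a fresh variable to each distinct suffix and read off an equation for it by a single application of the thread-extraction axioms TE1--TE13. The one complication is that the jump axioms TE7--TE11 rewrite $\extr{\cdot}$ of a suffix to $\extr{\cdot}$ of another suffix without exhibiting an action, which would produce an inadmissible equation whose right-hand side is a bare variable. I remove this by \emph{jump-chasing}: following the jumps from a given suffix either reaches, after finitely many steps, a suffix beginning with a basic, test, or termination instruction, or --- since there are finitely many suffixes --- cycles, which is exactly the situation of an infinite chain of forward jumps and is resolved to $\DeadEnd$. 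Retaining only the variables for non-jump suffixes and substituting, on right-hand sides, the chased continuation variable (or $\DeadEnd$) for each $\extr{\cdot}$ occurrence yields a finite specification of the required form. By soundness of TE1--TE13 the family of interpretations $\extr{s}^{\mathcal{M}}$ of these suffixes is a solution in $\mathcal{M}$, and $p = \extr{t}^{\mathcal{M}}$ is one of its components.

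For the right-to-left direction I would compile a given specification $\set{V = t_V \where V \in \mathcal{V}}$ into a single closed \PGA\ term. Reserve a block of primitive instructions for each variable: a postconditional composition $\pcc{t_1}{a}{t_2}$ is coded by a test instruction $\ptst{a}$ followed by forward jumps into the code for $t_1$ and the code for $t_2$, a subterm $\Stop$ by $\halt$, and $\DeadEnd$ by $\fjmp{0}$; concatenating the blocks and closing the term under the repetition operator $\ph\rep$ makes all target blocks reachable by forward jumps and keeps the term finite. Applying TE1--TE13 to this term shows that the threads extracted at the entry points of the blocks again satisfy the specification in $\mathcal{M}$; since the specification is guarded, AIP forces this solution to coincide with the given one, so the entry point corresponding to the component $p$ gives a closed term $t$ with $\extr{t}^{\mathcal{M}} = p$.

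I expect the bookkeeping of jumps to be the main obstacle in both directions: in the forward direction the jump-chasing-with-cycle-detection argument is what guarantees non-variable right-hand sides, and in the backward direction computing the correct jump offsets and wrapping the instruction sequence with $\ph\rep$ so that every block remains a reachable suffix is the delicate part. A secondary point requiring care is the status of the internal action $\Tau$: the thread-extraction axioms never introduce $\Tau$, so a faithful treatment of the backward direction must either restrict the right-hand sides $t_V$ to actions from $\BAct$ or account for $\Tau$-guarded equations via axiom~T1; I would settle this before carrying out the backward construction in full.
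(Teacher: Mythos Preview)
The paper does not supply a proof of this proposition at all; it simply states that the result is ``proved in~\cite{BM12b}'' and moves on. There is therefore no in-paper argument against which to compare your proposal.

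For what it is worth, your sketch follows the standard route one would expect for such an expressiveness result: enumerating the finitely many distinct suffixes of a closed \PGA\ term and jump-chasing to obtain a guarded specification in the forward direction, and block-compiling a finite guarded specification under $\ph\rep$ in the backward direction, with AIP providing uniqueness of solutions. Your self-flagged concern about $\Tau$ is a genuine loose end --- the thread-extraction axioms TE1--TE13 never emit $\Tau$, so a right-hand side of the form $\pcc{t_1}{\Tau}{t_2}$ is not directly produced by any \PGA\ term, and how this is reconciled (via T1, a restriction on the $t_V$, or an argument specific to the model) is something you would have to settle by consulting~\cite{BM12b} rather than the present paper.
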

More results on the expressiveness of \PGA\ can be found 
in~\cite{BM12b}.

\PGA\ instruction sequences are behaviourally equivalent if they 
produce the same behaviour under execution.
Behavioural equivalence is not a congruence.
Instruction sequences are behaviourally congruent if they produce the
same behaviour irrespective of the way they are entered and the way
they are left.

Let $t$ and $t'$ be closed \PGA\ terms.
Then:
\begin{itemize}
\item
$t$ and $t'$ are \emph{behaviourally equivalent}, 
written $t \beqv t'$, if $\extr{t} = \extr{t'}$ is derivable
from the axioms of \PGABTA;
\item
$t$ and $t'$ are \emph{behaviourally congruent}, written 
$t \bcong t'$, if, for each $l,n \in \Nat$,
$\fjmp{l} \conc t \conc \halt^n \beqv \fjmp{l} \conc t' \conc \halt^n$.%
\footnote
{We use the convention that $t \conc {t'}^0$ stands for $t$.}
\end{itemize}
Behavioural congruence is the largest congruence contained in
behavioural equivalence.

\section{The Case of Instructions for Turing Tapes}
\label{sect-PGAtt}

In this section, we present the instantiation of \PGA\ in which the 
instructions taken as basic instructions are all possible instructions 
to read out or alter the content of the cell of a Turing tape under the 
tape's head and to optionally move the head in either direction by one 
cell.

The instructions concerned are meant for Turing tapes of which each cell 
contains a symbol from the input alphabet $\IAlph$ or the symbol 
$\Blank$.
Turing proposed computing machines with a tape of which each cell 
contains a symbol from a finite alphabet, the so-called tape alphabet, 
that includes the input alphabet $\IAlph$ and the symbol $\Blank$ 
(see~\cite{Tur37a}).%
\footnote
{In many publications in which Turing machines are defined, the input 
 alphabet may even be any non-empty finite set of symbols.}
The tape alphabet may differ from one machine to another.
The choice between the tape alphabet $\TAlph$ and any tape alphabet that 
includes $\TAlph$ is rather arbitrary because it has no effect on 
the computability and the order-of-magnitude time complexity of partial 
functions from ${(\IAlph^*)}^n$ to $\IAlph^*$ ($n \geq 0$).
We have chosen for the tape alphabet $\TAlph$ because it allows of 
presenting part of the material to come in a more comprehensible manner.

In the present instantiation of \PGA, it is assumed that a fixed but 
arbitrary set $\Foci$ of \emph{foci} has been given.
Foci serve as names of Turing tapes.

The set of basic instructions used in this instantiation consists of the 
following:
\begin{itemize}
\item[]
for each $\funct{p}{\TAlph}{\IAlph}$, 
$\funct{q}{\TAlph}{\TAlph}$, $d \in \Dir$, and $f \in \Foci$, 
a \emph{basic Turing-tape instruction} $f.\mtt{p}{(q,d)}$.
\end{itemize}
We write $\BInstrtt$ for this set.

Each basic Turing-tape instruction consists of two parts separated 
by a dot.
The part on the left-hand side of the dot plays the role of the name of 
a Turing tape and the part on the right-hand side of the dot plays 
the role of an operation to be carried out on the named Turing tape 
when the instruction is executed.
The intuition is basically that carrying out the operation concerned 
produces as a reply $0$ or $1$ depending on the content of the cell 
under the head of the named Turing tape, modifies the content of this 
cell, and optionally moves the head in either direction by one cell.
More precisely, the execution of a basic Turing-tape instruction 
$f.\mtt{p}{(q,d)}$ has the following effects:
\begin{itemize}
\item
if the content of the cell under the head of the Turing tape named $f$ 
is $b$ when the execution of $f.\mtt{p}{(q,d)}$ starts, then the reply 
produced on termination of the execution of $f.\mtt{p}{(q,d)}$ is 
$p(b)$;
\item
if the content of the cell under the head of the Turing tape named $f$ 
is $b$ when the execution of $f.\mtt{p}{(q,d)}$ starts, then the content 
of this cell is $q(b)$ when the execution of $f.\mtt{p}{(q,d)}$ 
terminates;
\item
if the cell under the head of the Turing tape named $f$ is the $i$th 
cell of the tape when the execution of $f.\mtt{p}{(q,d)}$ starts, then 
the cell under the head of this Turing tape is the $\max(i+d,1)$th cell 
when the execution of $f.\mtt{p}{(q,d)}$ terminates.
\end{itemize}
The execution of $f.\mtt{p}{(q,d)}$ has no effect on Turing tapes other 
than the one named $f$.

We write $\PGABTAtt$ for \PGABTA\ with $\BInstr$ instantiated by 
$\BInstrtt$.
Notice that $\PGABTAtt$ is itself parameterized by a set of foci.
 
Some functions from $\TAlph$ to $\TAlph$ are:
\begin{itemize}
\item
the function $\FTest$, \,satisfying 
$\FTest(\False) = \True$ and $\FTest(\True) = \False$ \,and 
$\FTest(\Blank) = \False$;
\item
the function $\TTest$, \,satisfying 
$\TTest(\False) = \False$ and $\TTest(\True) = \True$ \,and 
$\TTest(\Blank) = \False$;
\item
the function $\BTest$, satisfying 
$\BTest(\False) = \False$ and $\BTest(\True) = \False$ and 
$\BTest(\Blank) = \True$;
\item
the function $\FFunc$, \,satisfying 
$\FFunc(\False) = \False$ and $\FFunc(\True) = \False$ \,and 
$\FFunc(\Blank) = \False$;
\item
the function $\TFunc$, \,satisfying 
$\TFunc(\False) = \True$ and $\TFunc(\True) = \True$ \,and 
$\FFunc(\Blank) = \True$;
\item
the function $\BFunc$, satisfying 
$\TFunc(\False) = \Blank$ and $\TFunc(\True) = \Blank$ and 
$\FFunc(\Blank) = \Blank$;
\item
the function $\IFunc$, \,satisfying 
$\IFunc(\False) = \False$ and $\IFunc(\True) = \True$ \,and 
$\FFunc(\Blank) = \Blank$;
\item
the function $\CFunc$, \,satisfying 
$\CFunc(\False) = \True$ and $\CFunc(\True) = \False$ \,and 
$\FFunc(\Blank) = \Blank$.
\end{itemize}
The first five of these functions are also functions from $\TAlph$ to 
$\IAlph$.

For some instances of $\mtt{p}{(q,d)}$, we introduce a special notation.
We write:
\begin{ldispl}
\begin{array}[t]{@{}l@{\;\;\mathrm{for}\;\;}l@{}}
\tsttt{\False} & \mtt{\FTest}{(\IFunc,0)}\;, \\
\tsttt{\True}  & \mtt{\TTest}{(\IFunc,0)}\;, \\
\tsttt{\Blank} & \mtt{\BTest}{(\IFunc,0)}\;,
\end{array}
\qquad
\begin{array}[t]{@{}l@{\;\;\mathrm{for}\;\;}l@{}}
\settt{\False}{d} & \mtt{\TFunc}{(\FFunc,d)}\;, \\
\settt{\True}{d}  & \mtt{\TFunc}{(\TFunc,d)}\;, \\
\settt{\Blank}{d} & \mtt{\TFunc}{(\BFunc,d)}\;,
\end{array}
\qquad
\begin{array}[t]{@{}l@{\;\;\mathrm{for}\;\;}l@{}}
\skptt{d} & \mtt{\TFunc}{(\IFunc,d)}\;,
\end{array}
\end{ldispl}
where $d \in \Dir$.

\section{Turing-Tape Families}
\label{sect-TTFA}

\PGA\ instruction sequences under execution may interact with the named 
services from a family of services provided by their execution 
environment.
In applications of \PGA, the services provided by an execution 
environment vary from Boolean registers via Turing tapes to random 
access memories of actual computers.%
\footnote
{A Boolean register consists of a single cell that contains a symbol
 from the alphabet $\IAlph$.
 Carrying out an operation on a Boolean register produces as a reply
 $0$ or $1$, depending on the content of the cell, and/or modifies the 
 content of the cell.}   
In this section, we consider service families in which the services are 
Turing tapes and introduce an algebraic theory of Turing-tape families 
called \TTFA\ (Turing-Tape Family Algebra).

A \emph{Turing-tape state} is a pair $(\tau,i)$, 
where $\funct{\tau}{\Natpos}{\TAlph}$ and $i \in \Natpos$, 
satisfying the condition that, for some $j \in \Natpos$, 
for all $k \in \Nat$, $\tau(j + k) = \Blank$. 
We write $\TStates$ for the set of all Turing-tape states.

Let $(\tau,i)$ be a Turing-tape state.
Then, for all $j \in \Natpos$, $\tau(j)$ is the content of the $j$th 
cell of the Turing tape concerned and the $i$th cell is the cell under 
its head.

Our Turing tapes are one-way infinite tapes.
Turing proposed computing machine with two-way infinite tapes 
(see~\cite{Tur37a}).
In many publications in which Turing machine are defined, Turing 
machines are a variant of Turing's computing machines with one or more 
one-way infinite tapes (cf.\ the 
textbooks~\mbox{\cite{AHU74a,AB09a,Gol08a,HS11a,Koz06a,Sip13a}}).
The choice between one-way infinite tapes and two-way infinite tapes is 
rather arbitrary because it has no effect on the computability and the 
order-of-magnitude time complexity  of partial functions from 
${(\IAlph^*)}^n$ to $\IAlph^*$ ($n \geq 0$).
We have chosen for one-way infinite tapes because it allows of 
presenting part of the material to come in a more comprehensible manner.

In Section~\ref{sect-TSI}, we will use the notation $\fncv{\tau}{i}{b}$.
For each $\funct{\tau}{\Natpos}{\TAlph}$, $i \in \Natpos$, and 
$b \in \TAlph$, $\fncv{\tau}{i}{b}$ is defined as follows: 
$\fncv{\tau}{i}{b}(i) = b$ and, for all $j \in \Natpos$ with $j \neq i$,
$\fncv{\tau}{i}{b}(j) = \tau(j)$.

In \TTFA, as in $\PGABTAtt$, it is assumed that a fixed but arbitrary 
set $\Foci$ of foci has been given.

\TTFA\ has one sort: the sort $\TTapeFam$ of 
\emph{Turing-tape families}.
To build terms of sort $\TTapeFam$, \TTFA\ has the following constants 
and operators:
\begin{itemize}
\item
the
\emph{empty Turing-tape family} constant 
$\const{\emptysf}{\TTapeFam}$;
\item
for each $f \in \Foci$ and $s \in \TStates \union \set{\Div}$, 
the \emph{singleton Turing-tape family} constant
$\const{f.\stt{s}}{\TTapeFam}$;
\item
the binary \emph{Turing-tape family composition} operator
$\funct{\ph \sfcomp \ph}{\TTapeFam \x \TTapeFam}{\TTapeFam}$;
\item
for each $F \subseteq \Foci$, 
the unary \emph{encapsulation} operator 
$\funct{\encap{F}}{\TTapeFam}{\TTapeFam}$.
\end{itemize}
We assume that there are infinitely many variables of sort $\TTapeFam$,
including $u,v,w$.
We use infix notation for the Turing-tape family composition 
operator.

The Turing-tape family denoted by $\emptysf$ is the empty Turing-tape 
family.
The Turing-tape family denoted by a closed term of the form $f.\stt{s}$, 
where $s \in \TStates$, consists of one named Turing tape only, the 
Turing tape concerned is an operative Turing tape named $f$ whose state 
is $s$.
The Turing-tape family denoted by a closed term of the form 
$f.\stt{\Div}$ consists of one named Turing tape only, the Turing tape 
concerned is an inoperative Turing tape named $f$.
The Turing-tape family denoted by a closed term of the form
$t \sfcomp t'$ consists of all named Turing tapes that belong to 
either the Turing-tape family denoted by $t$ or the Turing-tape family 
denoted by $t'$.
In the case where a named Turing tape from the Turing-tape 
family denoted by $t$ and a named Turing tape from the Turing-tape 
family denoted by $t'$ have the same name, they collapse to 
an inoperative Turing tape with the name concerned.
The Turing-tape family denoted by a closed term of the form 
$\encap{F}(t)$ consists of all named Turing tapes with a name not 
in $F$ that belong to the Turing-tape family denoted by $t$.

An inoperative Turing tape can be viewed as a Turing tape whose state 
is unavailable.
Carrying out an operation on an inoperative Turing tape is impossible.

The axioms of \TTFA\ are given in Table~\ref{axioms-BRFA}.%
\begin{table}[!t]
\caption{Axioms of \TTFA} 
\label{axioms-BRFA}
\begin{eqntbl}
\begin{axcol}
u \sfcomp \emptysf = u                                 & \axiom{TTFC1} \\
u \sfcomp v = v \sfcomp u                              & \axiom{TTFC2} \\
(u \sfcomp v) \sfcomp w = u \sfcomp (v \sfcomp w)      & \axiom{TTFC3} \\
f.\stt{s} \sfcomp f.\stt{s'} = f.\stt{\Div}            & \axiom{TTFC4}
\end{axcol}
\qquad
\begin{saxcol}
\encap{F}(\emptysf) = \emptysf                       & & \axiom{TTFE1} \\
\encap{F}(f.\stt{s}) = \emptysf      & \mif f \in F    & \axiom{TTFE2} \\
\encap{F}(f.\stt{s}) = f.\stt{s}     & \mif f \notin F & \axiom{TTFE3} \\
\multicolumn{2}{@{}l@{\quad}}
 {\encap{F}(u \sfcomp v) =
  \encap{F}(u) \sfcomp \encap{F}(v)}                   & \axiom{TTFE4}
\end{saxcol}
\end{eqntbl}
\end{table}
In this table, $f$ stands for an arbitrary focus from $\Foci$, 
$F$ stands for an arbitrary subset of $\Foci$, and 
$s$ and $s'$ stand for arbitrary members of $\TStates \union \set{\Div}$.
These axioms simply formalize the informal explanation given
above.

The following proposition, proved in~\cite{BM12b}, is a representation 
result for closed \TTFA\ terms.
\begin{proposition}
\label{proposition-represent}
For all closed \TTFA\ terms $t$, for all $f \in \Foci$, either 
$t = \encap{\set{f}}(t)$ is derivable from the axioms of \TTFA\ or there 
exists an $s \in \TStates \union \set{\Div}$ such that 
$t = f.\stt{s} \sfcomp \encap{\set{f}}(t)$ is derivable from the axioms 
of \TTFA.
\end{proposition}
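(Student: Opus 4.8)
The plan is to prove the statement by structural induction on the closed \TTFA\ term $t$, with the focus $f$ fixed throughout. Before starting the induction proper, I would isolate one auxiliary fact that the encapsulation case will need, namely that nested encapsulations combine: for all closed \TTFA\ terms $t$ and all $F,G \subseteq \Foci$, the equation $\encap{F}(\encap{G}(t)) = \encap{F \union G}(t)$ is derivable from the axioms of \TTFA. This auxiliary equation is itself established by a routine structural induction on $t$ --- the constant cases by a case distinction on membership in $F$, $G$, and $F \union G$ using TTFE1--TTFE3, the composition case by distributing with TTFE4 and invoking the induction hypothesis on each factor, and the encapsulation case by two applications of the induction hypothesis. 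In particular this yields that encapsulations commute, $\encap{F}(\encap{\set{f}}(t)) = \encap{\set{f}}(\encap{F}(t))$, since both sides reduce to $\encap{F \union \set{f}}(t)$.

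For the base cases, $t = \emptysf$ falls under the first alternative immediately by TTFE1. For $t = g.\stt{s}$ I would split on whether $g = f$: if $g \neq f$ then TTFE3 gives $t = \encap{\set{f}}(t)$, the first alternative; if $g = f$ then TTFE2 gives $\encap{\set{f}}(t) = \emptysf$, so by TTFC1 we obtain $t = f.\stt{s} \sfcomp \emptysf = f.\stt{s} \sfcomp \encap{\set{f}}(t)$, the second alternative. For the composition case $t = t_1 \sfcomp t_2$, I would apply the induction hypothesis to $t_1$ and $t_2$ and distinguish the four combinations of alternatives. When both factors satisfy the first alternative, TTFE4 gives $t = \encap{\set{f}}(t_1) \sfcomp \encap{\set{f}}(t_2) = \encap{\set{f}}(t)$. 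When exactly one factor carries an $f$-tape $f.\stt{s}$, the commutativity and associativity axioms TTFC2 and TTFC3 let me pull that singleton to the front and then collect the two encapsulated remainders under a single $\encap{\set{f}}$ via TTFE4, yielding the second alternative. When both factors carry an $f$-tape, say $f.\stt{s_1}$ and $f.\stt{s_2}$, the same regrouping together with the collapse axiom TTFC4 turns $f.\stt{s_1} \sfcomp f.\stt{s_2}$ into $f.\stt{\Div}$, so the second alternative holds with $s = \Div$.

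The encapsulation case $t = \encap{F}(t_1)$ is where the real work sits, and where the auxiliary lemma earns its keep. Applying the induction hypothesis to $t_1$: if $t_1 = \encap{\set{f}}(t_1)$, then the lemma gives $t = \encap{F}(\encap{\set{f}}(t_1)) = \encap{\set{f}}(\encap{F}(t_1)) = \encap{\set{f}}(t)$, the first alternative. If instead $t_1 = f.\stt{s} \sfcomp \encap{\set{f}}(t_1)$, then distributing with TTFE4 gives $t = \encap{F}(f.\stt{s}) \sfcomp \encap{F}(\encap{\set{f}}(t_1))$, and I would split on whether $f \in F$: when $f \in F$, TTFE2 collapses the first factor to $\emptysf$ while the lemma rewrites the second factor as $\encap{\set{f}}(t)$, placing $t$ in the first alternative; when $f \notin F$, TTFE3 leaves the first factor as $f.\stt{s}$ and the lemma again rewrites the second factor as $\encap{\set{f}}(t)$, placing $t$ in the second alternative. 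The only genuine obstacle is thus the bookkeeping of nested encapsulation operators, for which no single axiom suffices; once the auxiliary combining lemma is in hand, everything else is a direct application of the commutative-monoid axioms TTFC1--TTFC3, the collapse axiom TTFC4, and the homomorphism axiom TTFE4.
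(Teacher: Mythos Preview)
Your proof is correct. The paper does not actually supply its own proof of this proposition; it simply records that the result was proved in~\cite{BM12b}, so there is no in-paper argument to compare your approach against. Your structural induction, together with the auxiliary lemma $\encap{F}(\encap{G}(t)) = \encap{F \cup G}(t)$ to handle the nested-encapsulation case, is the natural route and all the case analyses you describe go through as stated.
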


In Section~\ref{sect-comput-boolstring-fnc}, we will use the notation 
$\Sfcomp{i = 1}{n} t_i$.
For each $i \in \Natpos$, let $t_i$ be a terms of sort $\TTapeFam$.
Then, for each $n \in \Natpos$, the term $\Sfcomp{i = 1}{n} t_i$ is 
defined by induction on $n$ as follows: $\Sfcomp{i = 1}{1} t_i = t_1$ 
and $\Sfcomp{i = 1}{n+1} t_i = \Sfcomp{i = 1}{n} t_i \sfcomp t_{n+1}$.
We use the convention that $\Sfcomp{i = 1}{n} t_i$ stands for $\emptysf$
if $n = 0$.

\section{Interaction of Threads with Turing Tapes}
\label{sect-TSI}

If instructions from $\BInstrtt$ are taken as basic instructions, a 
\PGA\ instruction sequence under execution may interact with named 
Turing tapes from a family of Turing tapes provided by its execution 
environment.
In line with this kind of interaction, a thread may perform a basic 
action basically for the purpose of modifying the content of a named 
Turing tape or receiving a reply value that depends on the content 
of a named Turing tape.
In this section, we introduce related operators.

We combine $\PGABTA(\BInstrtt)$ with \TTFA\ and extend the combination 
with the following operators for interaction of threads with Turing
tapes:
\begin{itemize}
\item
the binary \emph{use} operator
$\funct{\ph \sfuse \ph}{\Thr \x \TTapeFam}{\Thr}$;
\item
the binary \emph{apply} operator
$\funct{\ph \sfapply \ph}{\Thr \x \TTapeFam}{\TTapeFam}$;
\item
the unary \emph{abstraction} operator 
$\funct{\abstr{\Tau}}{\Thr}{\Thr}$;
\end{itemize}
and the axioms given in Tables~\ref{axioms-use-apply}.%
\footnote
{We write $t[t'/x]$ for the result of substituting term $t'$ for 
variable $x$ in term $t$.}
\begin{table}[!t]
\caption{Axioms for the use, apply and abstraction operator} 
\label{axioms-use-apply}
\begin{eqntbl}
\begin{saxcol}
\Stop  \sfuse u = \Stop                                & & \axiom{U1} \\
\DeadEnd \sfuse u = \DeadEnd                           & & \axiom{U2} \\
(\Tau \bapf x) \sfuse u = \Tau \bapf (x \sfuse u)      & & \axiom{U3} \\
\multicolumn{2}{@{}l@{}}{
(\pcc{x}{f.\mtt{p}{(q,d)}}{y}) \sfuse \encap{\set{f}}(u) =
\pcc{(x \sfuse \encap{\set{f}}(u))}
 {f.\mtt{p}{(q,d)}}{(y \sfuse \encap{\set{f}}(u))}     } & \axiom{U4} \\
(\pcc{x}{f.\mtt{p}{(q,d)}}{y}) \sfuse 
(f.\stt{\tau,i} \sfcomp \encap{\set{f}}(u)) \\ \quad {} = 
\Tau \bapf (x \sfuse (f.\stt{\fncv{\tau}{i}{q(\tau(i))},\max(i+d,1)} 
 \sfcomp \encap{\set{f}}(u))) & \mif p(\tau(i)) = \True  & \axiom{U5} \\
(\pcc{x}{f.\mtt{p}{(q,d)}}{y}) \sfuse 
(f.\stt{\tau,i} \sfcomp \encap{\set{f}}(u)) \\ \quad {} =
\Tau \bapf (y \sfuse (f.\stt{\fncv{\tau}{i}{q(\tau(i))},\max(i+d,1)} 
 \sfcomp \encap{\set{f}}(u))) & \mif p(\tau(i)) = \False & \axiom{U6} \\
(\pcc{x}{f.\mtt{p}{(q,d)}}{y}) \sfuse 
(f.\stt{\Div} \sfcomp \encap{\set{f}}(u)) = \DeadEnd                                          
                                                       & & \axiom{U7} \\
\proj{n}(x \sfuse u) = \proj{n}(x) \sfuse u            & & \axiom{U8} 
\eqnsep
\Stop  \sfapply u = u                                  & & \axiom{A1} \\
\DeadEnd \sfapply u = \emptysf                         & & \axiom{A2} \\
(\Tau \bapf x) \sfapply u = \Tau \bapf (x \sfapply u)  & & \axiom{A3} \\
(\pcc{x}{f.\mtt{p}{(q,d)}}{y}) \sfapply \encap{\set{f}}(u) = \emptysf
                                                       & & \axiom{A4} \\
(\pcc{x}{f.\mtt{p}{(q,d)}}{y}) \sfapply 
(f.\stt{\tau,i} \sfcomp \encap{\set{f}}(u)) \\ \qquad {} =
x \sfapply (f.\stt{\fncv{\tau}{i}{q(\tau(i))},\max(i+d,1)} 
  \sfcomp \encap{\set{f}}(u)) & \mif p(\tau(i)) = \True  & \axiom{A5} \\
(\pcc{x}{f.\mtt{p}{(q,d)}}{y}) \sfapply 
(f.\stt{\tau,i} \sfcomp \encap{\set{f}}(u)) \\ \qquad {} =
y \sfapply (f.\stt{\fncv{\tau}{i}{q(\tau(i))},\max(i+d,1)} 
  \sfcomp \encap{\set{f}}(u)) & \mif p(\tau(i)) = \False & \axiom{A6} \\
(\pcc{x}{f.\mtt{p}{(q,d)}}{y}) \sfapply 
(f.\stt{\Div} \sfcomp \encap{\set{f}}(u)) = \emptysf   & & \axiom{A7} \\
\LAND{k \geq n} t[\proj{k}(x)/z] = t'[\proj{k}(y)/z] \Limpl 
t[x/z] = t'[y/z]                                        & & \axiom{A8}
\eqnsep
\abstr{\Tau}(\Stop) = \Stop                            & & \axiom{C1} \\
\abstr{\Tau}(\DeadEnd) = \DeadEnd                      & & \axiom{C2} \\
\abstr{\Tau}(\Tau \bapf x) = \abstr{\Tau}(x)           & & \axiom{C3} \\
\abstr{\Tau}(\pcc{x}{f.\mtt{p}{(q,d)}}{y}) =
\pcc{\abstr{\Tau}(x)}{f.\mtt{p}{(q,d)}}{\abstr{\Tau}(y)}   & & \axiom{C4} \\
\LAND{k \geq 0}{}
 \abstr{\Tau}(\proj{k}(x)) = \abstr{\Tau}(\proj{k}(y)) \Limpl
\abstr{\Tau}(x) = \abstr{\Tau}(y)                      & & \axiom{C5}
\end{saxcol}
\end{eqntbl}
\end{table}
In these tables, $f$ stands for an arbitrary focus from $\Foci$, 
$p$ stands for an arbitrary function from $\TAlph$ to $\IAlph$,
$q$ stands for an arbitrary function from $\TAlph$ to $\TAlph$, 
$d$ stands for an arbitrary member of $\Dir$,
$\tau$ stands for an arbitrary function from $\Natpos$ to $\TAlph$, 
$i$~stands for an arbitrary natural number from $\Natpos$,
$n$ stands for an arbitrary natural number from $\Nat$, and
$t$ and $t'$ stand for arbitrary terms of sort $\TTapeFam$.
We use infix notation for the use and apply operators.
We write \PGABTATTI\ for the combination of \PGABTAtt\ and \TTFA\ 
extended with the use operator, the apply operator, the abstraction 
operator, and the axioms for these operators.

Axioms U1--U7 and A1--A7 formalize the informal explanation of the use 
operator and the apply operator given below and in addition stipulate 
what is the result of apply if an unavailable focus is involved~(A4) and 
what is the result of use and apply if an inoperative Turing tape 
is involved (U7 and A7).
Axioms U8 and A8 allow of reasoning about infinite threads, and 
therefore about the behaviour produced by infinite instruction sequences 
under execution, in the context of use and apply, respectively.

On interaction between a thread and a Turing tape, the thread 
affects the Turing tape and the Turing tape affects the 
thread.
The use operator concerns the effects of Turing tapes on threads 
and the apply operator concerns the effects of threads on Turing tapes.
The thread denoted by a closed term of the form $t \sfuse t'$ and the
Turing-tape family denoted by a closed term of the form
$t \sfapply t'$ are the thread and Turing-tape family, 
respectively, that result from carrying out the operation that is part 
of each basic action performed by the thread denoted by $t$ on the 
Turing tape in the Turing-tape family denoted by $t'$ with the 
focus that is part of the basic action as its name.
When the operation that is part of a basic action performed by a thread 
is carried out on a Turing tape, the content of the Turing tape is 
modified according to the operation concerned and the thread is affected 
as follows: the basic action turns into the internal action $\Tau$ and 
the two ways to proceed reduce to one on the basis of the reply value 
produced according to the operation concerned.

With the use operator the internal action $\Tau$ is left as a trace of 
each basic action that has led to carrying out an operation on a Turing
tape.
The abstraction operator serves to abstract fully from such internal 
activity by concealing $\Tau$.
Axioms C1--C4 formalizes the concealment of $\Tau$.
Axiom C5 allows of reasoning about infinite threads in the context of 
abstraction.

The following two theorems are elimination results for closed \PGABTATTI\ 
terms. \sloppy
\begin{theorem}
\label{theorem-elim-use}
For all closed \PGABTATTI\ terms $t$ of sort $\Thr$ in which all 
subterms of sort $\InSeq$ are repetition-free, there exists a closed 
\PGABTAtt\ term $t'$ of sort $\Thr$ such that $t = t'$ is derivable from 
the axioms of \PGABTATTI.
\end{theorem}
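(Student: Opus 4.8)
The plan is to prove, by a single structural induction that covers both sorts at once, a statement slightly stronger than the theorem: every closed \PGABTATTI\ term of sort $\Thr$ in which all $\InSeq$ subterms are repetition-free is derivably equal to a closed \BTA\ term, and every closed \PGABTATTI\ term of sort $\TTapeFam$ with the same restriction is derivably equal to a closed \TTFA\ term, i.e.\ one in which the apply operator does not occur. Since the basic actions that survive are from $\BInstrtt \union \set{\Tau}$, every closed \BTA\ term is in particular a closed \PGABTAtt\ term of sort $\Thr$, so the $\Thr$ half gives the theorem. The decisive consequence of the repetition-free hypothesis is that every thread encountered is finite, i.e.\ has a finite $\depth$; this is exactly what lets us avoid the three infinitary axioms U8, A8 and C5, so that all reductions remain finitary.

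Before the main induction I would settle three routine auxiliary facts. First, for each closed repetition-free \PGA\ term $s$ there is a closed \BTA\ term $p$ with $\extr{s} = p$ derivable; this goes by induction on the number of primitive instructions in $s$, first normalizing jumps with PGA5--PGA8 and then applying TE1--TE13. Second, $\proj{n}(p)$ reduces to a closed \BTA\ term for every closed \BTA\ term $p$, by induction on $n$ and the structure of $p$ using PR1--PR4. Third, $\abstr{\Tau}(p)$ reduces to a closed \BTA\ term, by induction on $\depth(p)$ using C1--C4 (with T1 to collapse $\Tau$-branches). None of these presents any difficulty.

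The heart of the matter is the elimination of use and apply on a finite thread $p$ against a closed \TTFA\ term $T$, which I would prove by induction on $\depth(p)$. The base cases $p \in \set{\DeadEnd,\Stop}$ are handled by U1/U2 (resp.\ A1/A2), and a $\Tau$-headed $p$ by U3/A3. For $p = \pcc{p_1}{f.\mtt{p}{(q,d)}}{p_2}$ I would invoke Proposition~\ref{proposition-represent} to rewrite $T$, for the relevant focus $f$, either as $\encap{\set{f}}(T)$ or as $f.\stt{s} \sfcomp \encap{\set{f}}(T)$ with $s \in \TStates \union \set{\Div}$. The three subcases --- focus absent (U4/A4, where the basic action $f.\mtt{p}{(q,d)}$ is preserved and is itself a legitimate \PGABTAtt\ action), focus operative (U5/U6 and A5/A6, split further on whether $p(\tau(i))$ is $\True$ or $\False$), focus inoperative (U7/A7) --- each either terminates or recurses on $p_1$ or $p_2$, whose depth is strictly smaller. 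The point to verify is that the updated tape family $f.\stt{\fncv{\tau}{i}{q(\tau(i))},\max(i+d,1)} \sfcomp \encap{\set{f}}(T)$ is again a closed \TTFA\ term (apply-free) denoting a legitimate Turing-tape state, so that both the induction hypothesis and Proposition~\ref{proposition-represent} remain applicable at every recursive step. This is the main obstacle: keeping the tape-family argument apply-free and in the canonical form that the use and apply axioms demand, throughout all the state updates.

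With these lemmas in hand the main induction is immediate. For a $\Thr$ term I distinguish its outermost symbol: $\DeadEnd$ and $\Stop$ are already \BTA\ terms; $\pcc{t_1}{\alpha}{t_2}$ and $\proj{n}(t_1)$ follow from the induction hypothesis together with the projection lemma; $\extr{s}$ from the extraction lemma; and $t_1 \sfuse T$ and $\abstr{\Tau}(t_1)$ from the use- and abstraction-elimination lemmas applied to the \BTA\ form of $t_1$ (and, for use, the \TTFA\ form of $T$) supplied by the induction hypothesis. For a $\TTapeFam$ term, the cases $\emptysf$, $f.\stt{s}$, $T_1 \sfcomp T_2$ and $\encap{F}(T_1)$ stay within \TTFA\ by the induction hypothesis, while $t_1 \sfapply T_1$ is discharged by the apply-elimination lemma. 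Since every recursive appeal inside the elimination lemmas is to a strictly shallower thread, and every appeal to the induction hypothesis is to a proper subterm, the argument is well founded.
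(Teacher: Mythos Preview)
Your proposal is correct and follows essentially the same approach as the paper: first eliminate thread extraction for repetition-free instruction sequences, then eliminate the interaction operators by structural induction, using Proposition~\ref{proposition-represent} to put tape families in the canonical form required by U4--U7 and A4--A7. The paper's own proof is far terser---it delegates the second step entirely to Theorem~3.1(1) of~\cite{BM12b}---and you have effectively reconstructed that argument in detail; your choice to treat the $\Thr$ and $\TTapeFam$ sorts simultaneously (thereby also covering Theorem~\ref{theorem-elim-apply}) is a minor organizational variation rather than a genuinely different route.
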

\begin{proof}
It is easy to prove by structural induction that, for all closed 
rep\-etition-free \PGABTAtt\ terms $s$ of sort $\InSeq$, there exists a 
closed \PGABTAtt\ term $s'$ of sort $\Thr$ such that $\extr{s} = s'$ is 
derivable from the axioms of \PGABTAtt.
Therefore, it is sufficient to prove the proposition for all closed 
\PGABTATTI\ terms $t$ of sort $\Thr$ in which no subterms of sort 
$\InSeq$ occur.
This is proved similarly to part~(1) of Theorem~3.1 from~\cite{BM12b}. 
\qed
\end{proof}
\begin{theorem}
\label{theorem-elim-apply}
For all closed \PGABTATTI\ terms $t$ of sort $\TTapeFam$ in which all 
subterms of sort $\InSeq$ are repetition-free, there exists a closed 
\PGABTAtt\ term $t'$ of sort $\TTapeFam$ such that $t = t'$ is derivable 
from the axioms of \PGABTATTI.
\end{theorem}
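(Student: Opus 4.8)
The plan is to prove this in close parallel to Theorem~\ref{theorem-elim-use}, the only essential difference being that the operator to be eliminated on top of the \TTFA\ operators is now the apply operator $\ph \sfapply \ph$ rather than the use operator. As in that proof, I would first record the auxiliary fact that, for every closed repetition-free \PGABTAtt\ term $s$ of sort $\InSeq$, there is a closed \BTA\ term $s'$ of sort $\Thr$ with $\extr{s} = s'$ derivable: this is a straightforward structural induction on $s$ using TE1--TE13, followed by elimination of any residual projection operators by PR1--PR4. This auxiliary fact reduces the problem to closed \PGABTATTI\ terms $t$ of sort $\TTapeFam$ in which no subterm of sort $\InSeq$ occurs; in particular every thread argument of an apply is then, up to derivable equality, a closed \BTA\ term and hence a finite thread.

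I would then proceed by structural induction on such a term $t$ of sort $\TTapeFam$. The cases in which $t$ is $\emptysf$ or $f.\stt{s}$ are immediate, and the cases in which $t$ is of the form $t_1 \sfcomp t_2$ or $\encap{F}(t_1)$ follow at once from the induction hypothesis applied to the strictly smaller tape-family subterms, since a term built with $\sfcomp$ and $\encap{F}$ from closed \TTFA\ terms is again a closed \TTFA\ term. The only substantial case is $t$ of the form $s \sfapply t_1$. Here I would first apply the induction hypothesis to $t_1$ to assume $t_1$ is a closed \TTFA\ term, and use Theorem~\ref{theorem-elim-use} together with the auxiliary fact above (noting that the $\Thr$-subterms of $t$ inherit the repetition-freeness hypothesis) to assume that the thread argument $s$ is a closed \BTA\ term, i.e.\ a finite thread.

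The crux is then an inner induction on the structure of the finite thread $s$. If $s$ is $\Stop$ or $\DeadEnd$, axioms A1 and A2 give $t_1$ and $\emptysf$ respectively; if $s$ is of the form $\Tau \bapf x$, axiom A3 and the inner induction hypothesis apply. If $s$ is of the form $\pcc{x}{f.\mtt{p}{(q,d)}}{y}$, I would invoke the representation result, Proposition~\ref{proposition-represent}, to rewrite $t_1$ relative to the focus $f$ into one of the two normal forms $\encap{\set{f}}(t_1)$ or $f.\stt{s_0} \sfcomp \encap{\set{f}}(t_1)$ with $s_0 \in \TStates \union \set{\Div}$. Axiom A4 settles the first form; in the second, A7 settles the subcase $s_0 = \Div$, while for $s_0 = (\tau,i)$ the reply $p(\tau(i))$ selects A5 or A6, reducing $t$ to $x \sfapply t_1'$ or $y \sfapply t_1'$ with the updated tape family $t_1' = f.\stt{\fncv{\tau}{i}{q(\tau(i))},\max(i+d,1)} \sfcomp \encap{\set{f}}(t_1)$, to which the inner induction hypothesis applies since $x$ and $y$ are proper subthreads of $s$ and $t_1'$ is again a closed \TTFA\ term.

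I expect the main obstacle to be bookkeeping rather than conceptual, namely ensuring that the inner induction is well founded. This is exactly where the repetition-freeness hypothesis is essential: it guarantees that every thread argument of an apply is a finite thread, so that the inner structural induction on $s$ terminates and axiom A8 (the projection-based principle for apply) need not be invoked. A secondary point to verify is that Proposition~\ref{proposition-represent} is applicable under the induction hypothesis on $t_1$ and that the updated tape family $t_1'$ remains a closed \TTFA\ term, which it does since both $f.\stt{\cdots}$ and $\encap{\set{f}}(t_1)$ are \TTFA\ terms. Overall the argument parallels that for part~(2) of Theorem~3.1 in~\cite{BM12b}.
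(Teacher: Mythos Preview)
Your proposal is correct and follows essentially the same approach as the paper: reduce to the case with no subterms of sort $\InSeq$ via the auxiliary thread-extraction fact, and then appeal to the argument of part~(2) of Theorem~3.1 from~\cite{BM12b}. The paper's own proof is merely that two-line reference, whereas you have spelled out what that referenced argument actually does (the outer structural induction on $\TTapeFam$ terms, the inner induction on the finite thread using A1--A7 together with Proposition~\ref{proposition-represent}); this is exactly the intended content.
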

\begin{proof}
As in the proof of Theorem~\ref{theorem-elim-use}, it is sufficient to 
prove the proposition for all closed \PGABTATTI\ terms $t$ of sort 
$\TTapeFam$ in which no subterms of sort $\InSeq$ occur.
This is proved similarly to part~(2) of Theorem~3.1 from~\cite{BM12b}. 
\qed
\end{proof}

\section{Computing Partial Functions from ${(\IAlph^*)}^n$ to $\IAlph^*$}
\label{sect-comput-boolstring-fnc}

In this section, we make precise in the setting of the algebraic theory
\PGABTATTI\ what it means that a given instruction sequence computes a 
given partial function from ${(\IAlph^*)}^n$ to $\IAlph^*$ 
($n \in \Nat$).

We write $\Focitt{k}$, where $k \in \Natpos$, for the set 
$\set{\ftt{i} \where 1 \leq i \leq k}$ of foci.
We write \PGABTATTIt{k}\ for \PGABTATTI\ with $\Foci$ instantiated by 
$\Focitt{k}$.

Below, we use the function
$\funct{\ctt}
  {\set{\funct{\tau}{\Natpos}{\TAlph} \where (\tau,1) \in \TStates}}
  {\TAlph^*}$ 
for extracting the content of a Turing tape.
This function is defined as follows: 
\begin{itemize}
\item[]
for all $n \in \Natpos$, for all $b_1,\ldots,b_n \in \TAlph$, \\
$\ctt(\tau) = b_1\,\ldots\,b_n$ iff 
$\tau(i) = b_i$ for all $i \leq n$,
$\tau(i) = \Blank$ for all $i > n$, and
$\tau(n) \neq \Blank$;
\item[]
$\ctt(\tau) = \epsilon\phantom{_1\,\ldots\,b_n}\hsp{-.075}$ iff 
$\tau(i) = \Blank$ for all $i \geq 1$.%
\footnote
{We write $\epsilon$ for the empty string.} 
\end{itemize}

Let $k \in \Natpos$,  
let $t$ be a closed \PGABTATTIt{k}\ term of sort $\InSeq$,
let $n \in \Nat$, let $\pfunct{F}{{(\IAlph^*)}^n}{\IAlph^*}$,%
\footnote
{We write $\pfunct{f}{A}{B}$ to indicate that $f$ is a partial function
 from $A$ to $B$.}
and
let $\funct{T}{\Nat}{\Nat}$.
Then $t$ \emph{computes $F$ with $k$ tapes in time $T$} if:
\begin{itemize}
\item
for all $w_1,\ldots,w_n \in \IAlph^*$ such that $F(w_1,\ldots,w_n)$ is
defined, \\
there exist $(\tau'_1,i_1),\ldots,(\tau'_{k-1},i_{k-1}) \in \TStates$ 
such that
\begin{ldispl}
\extr{t} \sfapply \Sfcomp{j = 1}{k} \ftt{j}.\stt{\tau_j,1} = 
\Sfcomp{j = 1}{k-1} \ftt{j}.\stt{\tau'_j,i_j} \sfcomp 
\ftt{k}.\stt{\tau'_k,1}\;,
\eqnsep
\depth(\extr{t} \sfuse \Sfcomp{j = 1}{k} \ftt{j}.\stt{\tau_j,1}) \leq
T(\len(w_1) + \ldots + \len(w_n))\;, 
\end{ldispl}%
where \\
$\tau_1$ is the unique $\funct{\tau}{\Natpos}{\TAlph}$ with 
$(\tau,1) \in \TStates$ and 
$\ctt(\tau) = w_1\,\Blank\,\ldots\,\Blank\,w_n$, \\
for $j \neq 1$, 
$\tau_j$ is the unique $\funct{\tau}{\Natpos}{\TAlph}$ with
$(\tau,1) \in \TStates$ and $\ctt(\tau) = \epsilon$, \\
$\tau'_k$ is the unique $\funct{\tau}{\Natpos}{\TAlph}$ with
$(\tau,1) \in \TStates$ and $\ctt(\tau) = F(w_1,\ldots,w_n)$; 
\item
for all $w_1,\ldots,w_n \in \IAlph^*$ such that $F(w_1,\ldots,w_n)$ is
undefined,
\begin{ldispl}
\extr{t} \sfapply \Sfcomp{j = 1}{k} \ftt{j}.\stt{\tau_j,1} = \emptysf\;,
\end{ldispl}%
where \\
$\tau_1$ is the unique $\funct{\tau}{\Natpos}{\TAlph}$ with 
$(\tau,1) \in \TStates$ and 
$\ctt(\tau) = w_1\,\Blank\,\ldots\,\Blank\,w_n$, \\
for $j \neq 1$, 
$\tau_j$ is the unique $\funct{\tau}{\Natpos}{\TAlph}$ with
$(\tau,1) \in \TStates$ and $\ctt(\tau) = \epsilon$.
\end{itemize}
We say that $t$ \emph{computes $F$ in time $T$} if there exists a 
$k \in \Natpos$ such that $t$ computes $F$ with $k$ tapes in time $T$, 
and we say that $t$ \emph{computes $F$} if there exists a 
$\funct{T}{\Nat}{\Nat}$ such that $t$ computes $F$ in time $T$.

With the above definition, we can establish whether an instruction 
sequence of the kind considered in \PGABTATTIt{k}\ ($k \in \Natpos$) 
computes a given partial function from ${(\IAlph^*)}^n$ to $\IAlph^*$ 
($n \in \Nat$) by equational reasoning using the axioms of 
\PGABTATTIt{k}.

A \emph{single-tape Turing-machine program} is a closed \PGABTATTIt{1}\ 
term of sort $\InSeq$ that is of the form 
$(t_1 \conc \ldots \conc t_n)\rep$, where each $t_i$ is of the form
\begin{ldispl}
\tsttt{0} \conc \fjmp{3} \conc \settt{b_0}{d_0} \conc u_0 \conc {} \\
\tsttt{1} \conc \fjmp{3} \conc \settt{b_1}{d_1} \conc u_1 \conc {} \\
\tsttt{\Blank} \conc \fjmp{3} \conc \settt{b_\Blank}{d_\Blank} \conc
u_\Blank\;, 
\end{ldispl}
where $b_0,b_1,b_\Blank \in \TAlph$, $d_0,d_1,d_\Blank \in \Dir$, and
\begin{itemize}
\item[]
$u_0$ \,is of the form 
$\fjmp{l}$ with $l \in \set{12 \mul i + 9 \where 0 \leq i < n}$ or 
$\fjmp{0}$ or $\halt$,
\item[]
$u_1$ \,is of the form 
$\fjmp{l}$ with $l \in \set{12 \mul i + 5 \where 0 \leq i < n}$ or 
$\fjmp{0}$ or $\halt$,
\item[]
$u_\Blank$ is of the form 
$\fjmp{l}$ with $l \in \set{12 \mul i + 1 \where 0 \leq i < n}$ or 
$\fjmp{0}$ or $\halt$.
\end{itemize}

We refrain from defining a $k$-tape Turing-machine program 
(for $k > 1$), which is much more involved than defining a single-tape 
Turing-machine program.
However, we remark that the theorems given below go through for $k$-tape 
Turing-machine programs.

The following theorem is a result concerning the computational power of 
single-tape Turing-machine programs.
\begin{theorem}
\label{theorem-turing-completeness}
For each $\pfunct{F}{{(\IAlph^*)}^n}{\IAlph^*}$, there exists a 
single-tape Turing-machine program $t$ such that $t$ computes $F$ iff 
$F$ is Turing-computable.
\end{theorem}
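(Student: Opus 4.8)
The statement is to be read as asserting that the functions computable by single-tape Turing-machine programs in the sense just defined are exactly the Turing-computable functions, so it is an adequacy (Church--Turing-style) result. The plan is to prove it by setting up a tight correspondence between single-tape Turing-machine programs and single-tape Turing machines that work with a one-way infinite tape and tape alphabet $\TAlph$, carrying out a step-by-step simulation in \PGABTATTIt{1}, and then invoking the robustness of the Turing-machine model with respect to the number of tapes, one-way versus two-way tapes, and the choice of tape alphabet --- robustness that is already appealed to in Sections~\ref{sect-PGAtt} and~\ref{sect-TTFA}.

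First I would make the correspondence explicit. To a single-tape Turing machine $M$ with state set $\set{q_0,\ldots,q_{m-1}}$, start state $q_0$, and transition function $\delta$, I associate the program $(t_1 \conc \ldots \conc t_m)\rep$ in which block $t_{i+1}$ encodes the transitions out of $q_i$: for each symbol $s$, if $\delta(q_i,s) = (q_j,b,d)$ then the $s$-branch of $t_{i+1}$ takes $\settt{b}{d}$ and lets $u_s$ be the forward jump that lands at the first instruction of $t_{j+1}$, while a transition into a halting state is encoded by $\halt$ and an undefined transition by $\fjmp{0}$. A short computation with the jump axioms PGA5--PGA8 shows that the jump lengths needed to reach the head of block $t_{j+1}$ from the three positions occupied by $u_0$, $u_1$, $u_\Blank$ are exactly of the forms $12\mul i+9$, $12\mul i+5$, $12\mul i+1$ prescribed in the definition, so every such $M$ yields a well-formed single-tape Turing-machine program and, conversely, every single-tape Turing-machine program arises from exactly one such $M$ (up to the harmless renaming of states as block indices).

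The heart of the argument is a simulation lemma carried out in \PGABTATTIt{1}. Writing $P_i$ for the thread that thread extraction assigns to ``execution about to enter block $t_{i+1}$'', I would prove, using TE1--TE13 and U1--U7, the one-step invariant: for every Turing-tape state $(\tau,h)$, if $M$ in configuration $(q_i,\tau,h)$ steps to $(q_j,\tau',h')$, then $P_i \sfuse \ftt{1}.\stt{\tau,h}$ equals, up to a bounded number (at most four) of leading $\Tau$-steps, the thread $P_j \sfuse \ftt{1}.\stt{\tau',h'}$; a halt gives $\Stop$ preceded by finitely many $\Tau$-steps, and a missing transition gives $\DeadEnd$ preceded by finitely many $\Tau$-steps. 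The companion statement for $\sfapply$ follows from A1--A7. Iterating the invariant along a halting run of $M$ establishes the first clause of ``$t$ computes $F$'' when $F(w_1,\ldots,w_n)$ is defined, and since each step contributes at most four to the depth, $\depth(\extr{t} \sfuse \ftt{1}.\stt{\tau_1,1})$ is bounded by four times the running time of $M$. When $F(w_1,\ldots,w_n)$ is undefined, $M$ runs forever, so the thread is infinite and never terminates; here every projection $\proj{k}(\extr{t})$ applied to the initial tape family reduces by A2 to $\emptysf$, whence $\extr{t} \sfapply \ftt{1}.\stt{\tau_1,1} = \emptysf$ follows from A8 applied with the term $z \sfapply \ftt{1}.\stt{\tau_1,1}$ for $z$. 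This non-terminating case is the delicate step and the main obstacle: the terminating part is routine equational reasoning, but divergence genuinely requires the infinitary axioms U8 and A8, and one must also check that a \emph{total} time bound can be chosen --- taking $T(\ell)$ to be four times the maximum running time of $M$ over the finitely many defined inputs of total length $\ell$ gives a well-defined $\funct{T}{\Nat}{\Nat}$.

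With the simulation lemma in hand, both directions follow. For the forward implication, if a single-tape Turing-machine program $t$ computes $F$, then its associated machine $M$ computes $F$ in the ordinary sense by the lemma, so $F$ is Turing-computable. For the converse, if $F$ is Turing-computable then, by the robustness of the model recalled above, $F$ is computed by a single-tape Turing machine $M$ that uses a one-way infinite tape and the tape alphabet $\TAlph$, presents its arguments as $w_1\,\Blank\,\ldots\,\Blank\,w_n$, and, as a final clean-up phase, returns its head to the first cell before halting; the program associated with $M$ is then a single-tape Turing-machine program computing $F$ by the lemma. A minor point to be discharged in this last step is the head-repositioning phase, which must be realised with the three tape symbols alone and using the $\max(i+d,1)$ convention at the left end of the tape, but this is a standard construction.
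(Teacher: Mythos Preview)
Your proposal is correct and follows essentially the same line as the paper: the paper's proof also rests on the one-to-one correspondence between transition functions of single-tape Turing machines with a semi-infinite tape and stay option and single-tape Turing-machine programs, together with the robustness of the Turing-machine model. You go well beyond the paper in making the simulation explicit in \PGABTATTIt{1} (via TE1--TE13, U1--U7, A1--A7), in handling the divergent case through A8, and in exhibiting a concrete time bound $T$; the paper leaves all of this implicit under the phrase ``can be simulated''.
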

\begin{proof}
For each $\pfunct{F}{{(\IAlph^*)}^n}{\IAlph^*}$, $F$ is 
Turing-computable iff there exists a Turing machine with a single 
semi-infinite tape and stay option that computes $F$.
There is an obvious one-to-one correspondence between the transition 
functions of such Turing machines and single-tape Turing-machine 
programs by which the Turing machines concerned can be simulated when 
they are applied to a single tape.
Hence, for each $\pfunct{F}{{(\IAlph^*)}^n}{\IAlph^*}$, there exists a 
single-tape Turing-machine program $t$ such that $t$ computes $F$ iff 
$F$ is Turing-computable.
\qed
\end{proof}

Below, we write $\TMPst$ for the set of all single-tape Turing-machine 
programs, and
$\poly$ for 
$\set{T \where 
 \funct{T}{\Nat}{\Nat} \Land T \mathrm{\,is\,a\,polynomial\,function}}$.

The following theorem is a result relating the complexity class 
$\mathbf{P}$ to the functions that can be computed by a single-tape 
Turing-machine program in polynomial time.
\begin{theorem}
\label{theorem-polynomial-time}
$\mathbf{P}$ is equal to the class of all functions 
$\funct{F}{\IAlph^*}{\IAlph}$ for which there exist an $t \in \TMPst$ 
and a $T \in \poly$ such that $t$ computes $F$ in time $T$.
\end{theorem}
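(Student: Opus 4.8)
The plan is to establish the asserted equality of classes by proving the two inclusions separately, in each case passing through the one-to-one correspondence between single-tape Turing-machine programs and single-tape Turing machines with stay option already invoked in the proof of Theorem~\ref{theorem-turing-completeness}, together with the textbook fact that $\mathbf{P}$ does not depend on the choice between single-tape and multi-tape Turing machines. Throughout, a function $\funct{F}{\IAlph^*}{\IAlph}$ is read as the characteristic function of a language over $\IAlph$, and $F \in \mathbf{P}$ means that this language lies in $\mathbf{P}$.

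The preparatory observation I would isolate first concerns the two time measures involved. On the one hand, the run of $t \in \TMPst$ on the tape $\ftt{1}.\stt{\tau_1,1}$ simulates, block by block, the successive transitions of the corresponding single-tape machine $M$ with stay option: each transition is realised by executing one block $t_i$, within which only a bounded number of primitive instructions give rise to basic actions, namely the (at most three) test instructions $\tsttt{0},\tsttt{1},\tsttt{\Blank}$ and the single set instruction $\settt{b}{d}$. On the other hand, after applying the use operator these basic actions appear as the internal action $\Tau$, so that $\depth(\extr{t} \sfuse \ftt{1}.\stt{\tau_1,1})$ counts exactly the basic actions executed along the run. Consequently this depth and the number of steps taken by $M$ differ only by a constant factor, and since $\poly$ is closed under multiplication by constants, one is polynomially bounded in the input length iff the other is.

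For the inclusion from right to left, suppose $t \in \TMPst$ and $T \in \poly$ satisfy that $t$ computes $F$ in time $T$; as $t$ is a closed \PGABTATTIt{1}\ term, it does so with a single tape. By the correspondence of Theorem~\ref{theorem-turing-completeness}, $t$ simulates a single-tape Turing machine $M$ with stay option that computes $F$, and by the preparatory observation $M$ runs in time $O(T(n))$, hence in polynomial time; a single-tape machine being a special case of a multi-tape machine, $F \in \mathbf{P}$. For the converse inclusion, suppose $F \in \mathbf{P}$, so that $F$ is computed by some multi-tape Turing machine within a polynomial time bound $p$. I would invoke the standard quadratic-overhead simulation of a multi-tape machine by a single-tape machine, followed by the routine elimination or introduction of the stay option, to obtain a single-tape machine $M'$ with stay option computing $F$ in time $O(p(n)^2)$, again polynomial. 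Applying the correspondence of Theorem~\ref{theorem-turing-completeness} in the other direction produces a program $t \in \TMPst$ simulating $M'$, and the preparatory observation supplies a polynomial $T$ for which $t$ computes $F$ in time $T$.

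The Turing-machine simulations are off-the-shelf and contribute nothing difficult; see for instance \cite{HS11a,Sip13a}. I expect the only point demanding genuine care to be the preparatory observation bridging the two notions of time --- specifically, confirming that the depth of $\extr{t} \sfuse \ftt{1}.\stt{\tau_1,1}$ really counts the executed basic actions and that each simulated transition contributes only boundedly many of them, so that polynomiality transfers faithfully in both directions. Since $\poly$ absorbs constant and polynomial factors, no sharper accounting of the overhead is required.
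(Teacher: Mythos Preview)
Your proposal is correct and follows essentially the same route as the paper: both rely on the one-to-one correspondence of Theorem~\ref{theorem-turing-completeness} together with the observation that the correspondence preserves running time up to a constant factor. Your write-up is considerably more explicit than the paper's two-line argument---in particular you spell out both inclusions, the depth-versus-steps accounting, and the multi-tape to single-tape reduction---but the underlying idea is the same.
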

\begin{proof}
This follows from the proof of Theorem~\ref{theorem-turing-completeness} 
and the fact that, if a function $\funct{F}{\IAlph^*}{\IAlph}$ is 
computed on a Turing machine in time $T$, then the one-to-one 
correspondence referred to in the proof of 
Theorem~\ref{theorem-turing-completeness} yields for this Turing machine 
a single-tape Turing-machine program that computes $F$ in a time 
of~$O(T)$.
\qed
\end{proof}

We think that Theorems~\ref{theorem-turing-completeness} 
and~\ref{theorem-polynomial-time} above provide evidence of the claim 
that \PGABTATTI\ is a suitable setting for the development of theory in 
areas such as computability and computational complexity.
Moreover, in this setting variations on Turing machines that have not 
attracted attention yet come into the picture and can be studied.

\section{A Turing-Machine Program Example}
\label{sect-example}

In this section, we give a simple example of a Turing-machine program. 
We consider the \emph{non-zeroness test} function 
$\funct{\NZT}{{(\IAlph^*)}^1}{\IAlph^*}$ defined by
\begin{ldispl}
\NZT(b_1\,\ldots\,b_n) = 0 \;\;\mathrm{if}\;\;
b_1 = 0 \;\mathrm{and}\; \ldots \;\mathrm{and}\; b_n = 0\;,
\\
\NZT(b_1\,\ldots\,b_n) = 1 \;\;\mathrm{if}\;\;
b_1 = 1 \;\;\,\mathrm{or}\;\; \ldots \;\;\,\mathrm{or}\;\; b_n = 1\;.
\end{ldispl}%
$\NZT$ models the function $\funct{\NZTN}{\Nat}{\Nat}$ defined by 
$\NZTN(0) = 0$ and $\NZTN(k + 1) = 1$ with respect to the binary 
representations of the natural numbers.

We define a Turing-machine program $\NZTIS$ that computes $\NZT$ as 
follows:
\begin{ldispl}
\NZTIS \deq 
(\ntst{\tsttt{0}} \conc \fjmp{3} \conc \settt{0}{1} \conc
 \fjmp{33} \conc {}
\\ \phantom{\NZTIS \deq (}
 \ntst{\tsttt{1}} \conc \fjmp{3} \conc \settt{1}{1} \conc
 \fjmp{29} \conc {}
\\ \phantom{\NZTIS \deq (}
 \ntst{\tsttt{\Blank}} \conc \fjmp{3} \conc \settt{\Blank}{{-}1} \conc
 \fjmp{1} \conc {}
\\ \phantom{\NZTIS \deq (}
 \ntst{\tsttt{0}} \conc \fjmp{3} \conc \settt{\Blank}{{-}1} \conc
 \fjmp{33} \conc {}
\\ \phantom{\NZTIS \deq (}
 \ntst{\tsttt{1}} \conc \fjmp{3} \conc \settt{\Blank}{{-}1} \conc
 \fjmp{5} \conc {}
\\ \phantom{\NZTIS \deq (}
 \ntst{\tsttt{\Blank}} \conc \fjmp{3} \conc \settt{0}{0} \conc
 \halt \conc {}
\\ \phantom{\NZTIS \deq (}
 \ntst{\tsttt{0}} \conc \fjmp{3} \conc \settt{\Blank}{{-}1} \conc
 \fjmp{33} \conc {}
\\ \phantom{\NZTIS \deq (}
 \ntst{\tsttt{1}} \conc \fjmp{3} \conc \settt{\Blank}{{-}1} \conc
 \fjmp{29} \conc {}
\\ \phantom{\NZTIS \deq (}
 \ntst{\tsttt{\Blank}} \conc \fjmp{3} \conc \settt{1}{0} \conc
 \halt)\rep\;.
\end{ldispl}%
First, the head is moved to the right cell-by-cell until the first cell 
whose content is $\Blank$ has been reached and after that the head is 
moved to the left by one cell.
Then, the head is moved to the left cell-by-cell until the first cell of 
the tape has been reached and on top of that the content of each cell 
that comes under the head is replaced by $\Blank$.
Finally, the content of the first cell is replaced by $1$ if at least 
one cell with content $1$ has been encountered during the moves to the 
left and the content of the first cell is replaced by $0$ if no cell 
with content $1$ has been encountered during the moves to the left.

Because Turing-machine programs closely resemble the transition 
functions of Turing machines, they have built-in inefficiencies.
We use $\NZTIS$ to illustrate this.
We define an instruction sequence $\NZTISp$ that computes $\NZT$ 
according to the same algorithm in less time than $\NZTIS$ as follows:
\begin{ldispl}
\NZTISp \deq 
(\ptst{\tsttt{\Blank}} \conc \fjmp{3} \conc \skptt{1} \conc
 \fjmp{18} \conc {}
\\ \phantom{\NZTISp \deq (}
 \skptt{{-}1} \conc {}
\\ \phantom{\NZTISp \deq (}
 \ntst{\tsttt{0}} \conc \fjmp{3} \conc \settt{\Blank}{{-}1} \conc
 \fjmp{18} \conc {}
\\ \phantom{\NZTISp \deq (}
 \ntst{\tsttt{1}} \conc \fjmp{3} \conc \settt{\Blank}{{-}1} \conc
 \fjmp{3} \conc {}
\\ \phantom{\NZTISp \deq (}
 \settt{0}{0} \conc \halt \conc {}
\\ \phantom{\NZTISp \deq (}
 \ptst{\tsttt{\Blank}} \conc \fjmp{3} \conc \settt{\Blank}{{-}1} \conc
 \fjmp{18} \conc {}
\\ \phantom{\NZTISp \deq (}
 \settt{1}{0} \conc \halt)\rep\;.
\end{ldispl}%
In $\NZTISp$, which is clearly not a single-tape Turing-machine program, 
all instructions of the form $\tsttt{b}$ that are redundant or can be 
made redundant by using instructions of the form $\skptt{d}$ are 
removed.

In~\cite{BM18a}, we have presented instruction sequences that compute 
the restriction of $\NZT$ to $\IAlph^n$, for $n > 0$.
The instruction sequences concerned are instruction sequences that, 
under execution, can act on Boolean registers instead of Turing tapes.

\section{Concluding Remarks}
\label{sect-concl}

We have presented an instantiation of a parameterized algebraic theory 
of single-pass instruction sequences, the behaviours produced by such 
instruction sequences under execution, and the interaction between such 
behaviours and components of an execution environment. 
The parameterized theory concerned is the basis of a line of research in 
which issues relating to a wide variety of subjects from computer 
science have been rigorously investigated thinking in terms of 
instruction sequences (see~\cite{SiteIS}).
In the presented instantiation of this parameterized theory, all 
possible instructions to read out or alter the content of the cell of a 
Turing tape under the tape's head and to optionally move the head in 
either direction by one cell are taken as basic instructions and Turing 
tapes are taken as the components of an execution environment.

The instantiated theory provides a setting  for the development of 
theory in areas such as computability and computational complexity that 
distinguishes itself by offering the possibility of equational reasoning 
and being more general than the setting provided by a known version of 
the Turing-machine model of computation.
Many known and unknown versions of the Turing-machine model of 
computation can be dealt with by imposing apposite restrictions.

We have defined the notion of a single-tape Turing-machine program in 
the setting of the instantiated theory and have provided evidence for 
the claim that the theory provides a suitable setting for the 
development of theory in areas such as computability and computational 
complexity.
Single-tape Turing-machine programs and multiple-tape Turing-machine 
programs make up only small parts of the instruction sequences that can 
be considered in this setting.
This largely explains why it is more general than the setting provided 
by a known version of the Turing-machine model of computation.
From our experience in previous work with a comparable algebraic theory 
of instruction sequences, with instructions that operate on Boolean 
registers instead of Turing tapes, we expect that the generality is 
conducive to the investigation of novel issues in areas such as 
computability and computational complexity.

The given presentation of the instantiated theory is set up in a way 
where the introduction of services, the generic kind of 
execution-environment components from the parameterized theory in 
question, is circumvented.
In~\cite{BM18b}, the presentation of another instantiation of the same 
parameterized theory, with instructions that operate on Boolean 
registers, is set up in the same way.
The distinguishing feature of this way of presenting an instantiation of 
this parameterized theory is that it yields a less involved presentation 
than the way adopted in earlier work based on an instantiation of this 
parameterized theory.

The closed terms of the instantiated theory that are of sort $\InSeq$ 
can be considered to constitute a programming language of which the 
syntax and semantics is defined following an algebraic approach.
This approach is more operational than the usual algebraic approach
which is among other things followed in~\cite{BDMW81a,BWP87a,GM96a}.
A more operational approach is needed to make it possible to investigate 
issues in the area of computational complexity.

Broadly speaking, the work presented in this paper is concerned with 
formalization in subject areas, such as computability and computational
complexity, that traditionally relies on a version of the Turing-machine 
model of computation.
To the best of our knowledge, very little work has been done in this 
area.
Three notable exceptions are~\cite{AR15a,Nor11a,XZU13a}.
However, in those papers, formalization means formalization in a theorem
prover (Matita, HOL4, Isabelle/HOL).
Little or nothing is said in these papers about the syntax and semantics 
of the notations used --- which are probably the ones that have to be 
used in the theorem provers.
This makes it impracticable to compare the work presented in those 
papers with our work, but it is of course clear that the approach 
followed in the work presented in those papers is completely different 
from the algebraic approach followed in our work.

\bibliographystyle{splncs03}
\bibliography{IS}

\end{document}